\def\BibTeX{{\rm B\kern-.05em{\sc i\kern-.025em b}\kern-.08em
    T\kern-.1667em\lower.7ex\hbox{E}\kern-.125emX}}
\newcommand{\ba}{\begin{array}}
\newcommand{\ea}{\end{array}}
\newcommand{\be}{\begin{displaymath}}
\newcommand{\ee}{\end{displaymath}}
\newcommand{\ben}{\begin{equation}}
\newcommand{\een}{\end{equation}}
\newcommand{\bena}{\begin{eqnarray}}
\newcommand{\eena}{\end{eqnarray}}
\newcommand{\beqa}{\begin{eqnarray*}}
\newcommand{\enqa}{\end{eqnarray*}}
\newcommand{\bc}{\begin{center}}
\newcommand{\ec}{\end{center}}
\newcommand{\bi}{\begin{itemize}}
\newcommand{\ei}{\end{itemize}}
\newcommand{\benu}{\begin{enumerate}}
\newcommand{\eenu}{\end{enumerate}}
\newcommand{\bdes}{\begin{description}}
\newcommand{\edes}{\end{description}}
\newcommand{\bt}{\begin{tabular}}
\newcommand{\et}{\end{tabular}}
\newcommand \mubf{\mbox{\boldmath$\mu$\unboldmath}}
\newcommand \hbf{{\bf h}}
\newcommand \nbf{{\bf n}}
\newcommand \vbf{{\bf v}}
\newcommand \wbf{{\bf w}}
\newcommand \xbf{{\bf x}}
\newcommand \ybf{{\bf y}}
\newcommand \Cbf{{\bf C}}
\newcommand \Fbf{{\bf F}}
\newcommand \Hbf{{\bf H}}
\newcommand \Ibf{{\bf I}}
\newcommand{\circlambda}{\mbox{$\Lambda$
             \kern-.85em\raise1.5ex
             \hbox{$\scriptstyle{\circ}$}}\,}
\newcommand{\mypara}[1]{{\smallskip \noindent \bf #1}\hspace{0.1in}}
\newcommand{\ssf}[1]{\textrm{$\sf{#1}$}{}}
\newcommand{\real}{\text{Re}}
\newtheorem{theorem}{Theorem}
\newtheorem{lemma}{Lemma}
\newtheorem{assumption}{Assumption}
\newtheorem{remark}{Remark}
\DeclareMathOperator*{\argmin}{arg\,min}
\newcommand{\vect}[1]{\mathbf{#1}}
\newcommand{\expt}{\mathbb{E}}
\newcommand{\variance}{\mathbb{V}\text{ar}}
\newcommand{\norm}[1]{\left \| #1 \right \|}
\newcommand{\squab}[1]{\left [ #1 \right ]}
\newcommand{\congc}[1]{{\color{red}(Cong: #1)}}
\newcommand{\congc}[1]{}
\newcommand{\zixiang}[1]{{\color{blue}(Zixiang: #1)}}
\newcommand{\zixiang}[1]{}
\begin{document}


\title{Random Orthogonalization for Federated Learning in Massive MIMO Systems}

\author{\IEEEauthorblockN{Xizixiang Wei$^*$, Cong Shen$^*$, Jing Yang$^\dag$, H. Vincent Poor$^\ddag$}
\IEEEauthorblockA{$^*$ Department of Electrical and Computer Engineering, University of Virginia, USA\\
$^\dag$ Department of Electrical Engineering, The Pennsylvania State University, USA\\
$^\ddag$ Department of Electrical and Computer Engineering, Princeton University, USA
}
}

\maketitle

\begin{abstract}
We propose a novel uplink communication method, coined \emph{random orthogonalization}, for federated learning (FL) in a massive multiple-input and multiple-output (MIMO) wireless system. The key novelty of random orthogonalization comes from the tight coupling of FL model aggregation and two unique characteristics of massive MIMO -- channel hardening and favorable propagation. As a result, random orthogonalization can achieve natural over-the-air model aggregation without requiring transmitter side channel state information, while significantly reducing the channel estimation overhead at the receiver. Theoretical analyses with respect to both communication and machine learning performances are carried out. In particular, an explicit relationship among the convergence rate, the number of clients and the number of antennas is established. Experimental results validate the effectiveness and efficiency of random orthogonalization for FL in massive MIMO.

\end{abstract}

\begin{IEEEkeywords}
Federated Learning; Convergence Analysis; Massive MIMO.
\end{IEEEkeywords}

\section{Introduction}


Communication overhead is widely considered one of the primary bottlenecks for federated learning (FL) \cite{mcmahan2017fl,konecny2016fl}, as a FL task consists of multiple learning rounds, each of which requires uplink and downlink model exchange between clients and the server. Compared with downlink broadcasting, uplink communication is more challenging in FL. Due to the strigent power constraint at edge devices, channel noise and fading have more conspicuous impacts on uplink communications. More importantly, the limited uplink communication resources may severely limit the \emph{scalability} of FL, negatively affecting one of its primary features \cite{kairouz2019advances}.

To tackle the scalability problem in FL uplink communications, several over-the-air computation (also known as \emph{AirComp}) mechanisms have been exploited in wireless FL (see \cite{niknam2020federated} and the references therein). Instead of decoding the individual local models of each client and then aggregating, AirComp allows multiple clients to transmit uplink signals in a superpositioned fashion, and decodes the average model (global model) directly at the FL server. Zhu et al. \cite{zhu2019broadband} propose an analog aggregation framework which ``inverts'' the fading channel at each transmitter, so that the sum model can be directly obtained at the server. However, the fundamental limitation of analog aggregation is that it requires channel state information at transmitter (CSIT). The process of enabling CSIT is complicated and the precision of CSIT is often worse than the channel state information at receiver (CSIR). Besides, analog aggregation essentially requires a channel inversion power control, which is well known to ``blow up'' when channel is in deep fade. Moreover, analog aggregation does not naturally extend to multiple-input and multiple-output (MIMO) systems where the uplink channels become vectors, which makes channel inversions at the transmitters nontrivial.  

This paper aims at designing a simple-yet-effective uplink FL communication and model aggregation method. To address the scalability challenge in FL, we explore another design degree of freedom (d.o.f.) in modern wireless systems: \emph{massive MIMO}. The proposed framework only requires the BS to estimate a summation channel, which significantly alleviates the burden on uplink channel estimation in FL. Moreover, this approach is agnostic to the number of clients, making it attractive for the scalability of FL.  By tightly integrating the channel hardening and favorable propagation properties of massive MIMO, the proposed scheme, coined \emph{random orthogonalization}, allows the BS to directly compute the global model via a simple linear projection operation, thus achieving extremely low complexity and low latency. To analyze the performances of random orthogonalization, we derive the Cramer-Rao lower bounds (CRLBs) of the average model estimation as a theoretical benchmark. Moreover, taking both interference and noise into consideration, a novel convergence bound of FL is derived for the proposed method over massive MIMO channels. Notably, we establish an explicit relationship among the convergence rate, the number of clients $K$, and the number of antennas $M$, which provides practical design guidance for wireless FL. Numerical results validate the effectiveness and efficiency of the proposed method.

The potential of MIMO for wireless FL has attracted interest recently. 
MIMO beamforming design to optimize FL has been studied in \cite{yang2020federated,elbir2020federated}. Coding, quantization, and compressive sensing over a (massive) MIMO channel for FL has been studied in \cite{huang2020physical,jeon2020compressive,jeon2020gradient}. Nevertheless, none of these works tightly incorporates the unique properties of massive MIMO in the FL uplink communication design. On the other hand, massive MIMO can also be utilized in a straightforward manner, e.g., one can use traditional MIMO decoders such as zero-forcing (ZF) or minimum mean-square-error (MMSE) to estimate each local model, and then compute the global model. However, this heuristic approach requires large channel estimation overhead, especially in massive MIMO. Decoding individual local models also makes it easier for the server to sketch the data distribution of a client. Moreover, matrix inversion operations in ZF or MMSE detectors are computationally demanding, which increases the complexity and latency.


The remainder of this paper is organized as follows. Section \ref{sec:model} introduces the FL pipeline and the uplink communication model. The proposed random orthogonalization design is detailed in Section \ref{sec:RO}. The CRLB evaluation along with the model convergence analysis are presented in Section \ref{sec:analysis}. Experimental results are reported in Section \ref{sec:sim}, followed by the conclusion of our work in Section \ref{sec:conc}.

\section{System Model}
\label{sec:model}

\subsection{FL Model}
Consider a FL task with a central server and $K$ clients. Each client $k \in [K]$ stores a (disjoint) local dataset $\mathcal{D}_k$, with its size denoted by $D_k$. The size of the total data is $D \triangleq \sum_{k\in [K]} D_k$. We use $f_k(\wbf)$ to denote the local loss function at client $k$, which measures how well a machine learning (ML) model with parameter $\wbf \in \mathbb{R}^d$ fits its local dataset. The global objective function over all $K$ clients is
$
    f(\wbf) = \sum_{k \in [K]} p_k f_k(\wbf),
$
where $p_k = \frac{D_k}{D}$ is the weight of each local loss function, and the purpose of FL is to distributively find the optimal model parameter $\wbf^*$ that minimizes the global loss function: 
$
    \wbf^* \triangleq \argmin_{\wbf\in\mathbb{R}^d}f(\wbf).
$
A typical wireless FL pipeline is illustrated in Fig.~\ref{fig:FLpipeline}. Specifically, this pipeline iteratively executes the following steps at the $t$-th learning round.
\begin{enumerate}
\item \textbf{Downlink communication.} The BS broadcasts the current global model $\wbf_t$ to all devices over the downlink wireless channel.
\item \textbf{Local computation.} Each client uses its local data to train a local model improved upon the received global model $\wbf_t$. We assume that mini-batch stochastic gradient descent (SGD) is used to minimize the local loss function. The parameter is updated iteratively (for $E$ steps) at client $k$ as: $\wbf_{t,0}^k = \wbf_t; \wbf_{t,\tau}^k = \wbf_{t,\tau-1}^k - \eta_t \nabla \tilde{f}_k(\wbf_{t,\tau - 1}^k); \forall \tau = 1, \cdots, E; \wbf_{t+1}^k = \wbf_{t,E}^k$, where $\nabla\tilde{f}_k(\wbf)$ denotes the mini-batch SGD operation at client $k$ on model $\wbf$.
\item \textbf{Uplink communication.} Each client uploads its latest local model to the server synchronously over the uplink wireless channel.
\item \textbf{Server Aggregation.} The BS aggregates the received noisy local models $\tilde \wbf_{t+1}^k$ to generate a new global model: $\wbf_{t+1} = \Sigma_{k\in [K]} p_{k} \tilde \wbf_{t+1}^k$. For simplicity, we assume that each local dataset has equal size, hence $p_{k} = \frac{1}{K}$. 
\end{enumerate}

This work focuses on steps $3$ and $4$ in the FL pipeline. In particular, we take advantage of the unique properties of massive MIMO to design efficient FL uplink communication and server aggregation. 

\begin{figure}[t]
    \centering
    \includegraphics[width = 0.8\linewidth]{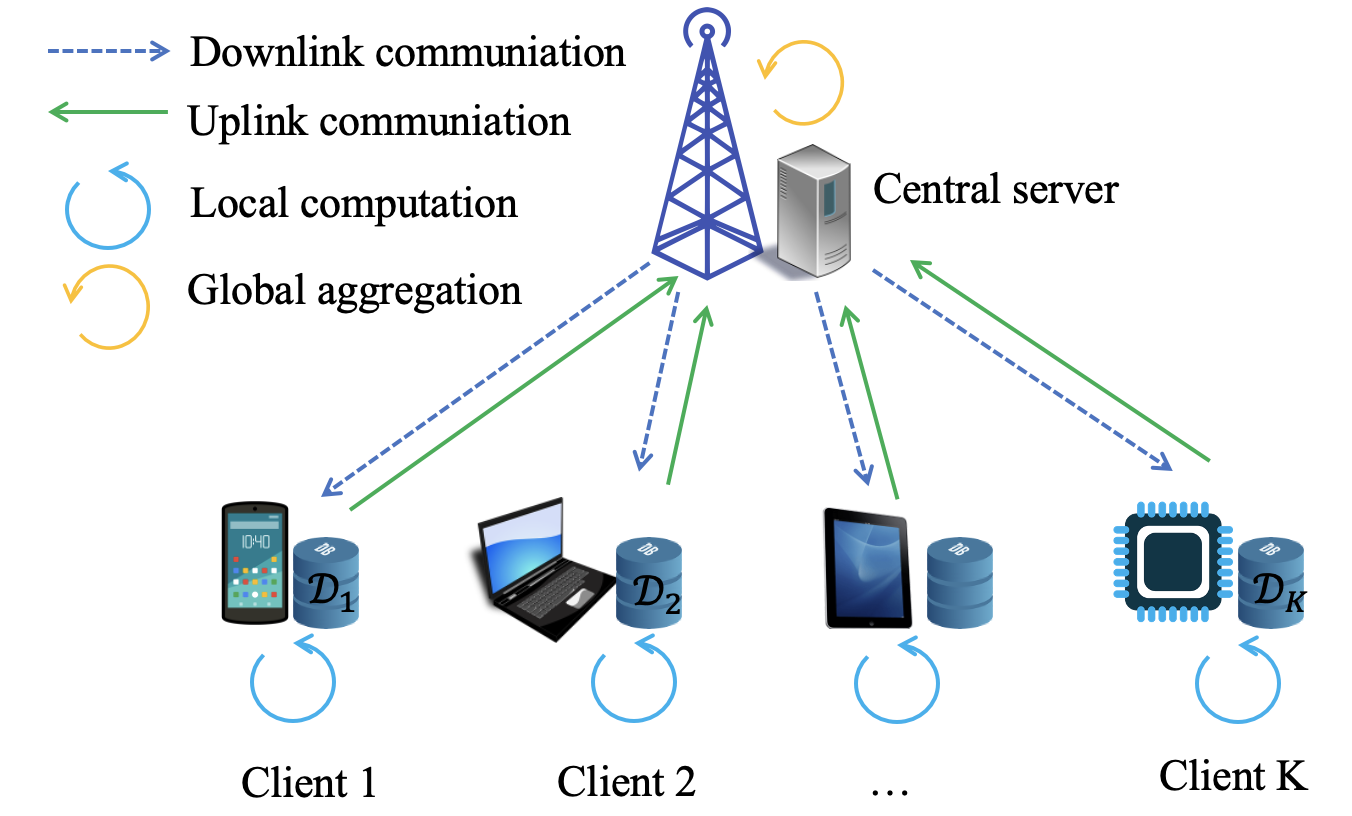}
    \caption{The wireless FL pipeline.}
    \vspace{-0.2in}
    \label{fig:FLpipeline}
\end{figure}

\subsection{Communication Model}
Consider a massive MIMO system equipped with $M$ antennas at the BS (server) where $K$ single-antenna devices (clients) are involved in the aforementioned FL task. At the uplink step of the $t$-th round, each client transmits the differential between the received global model and the computed new local model 
${\mathsf{x}_t^k} = \wbf_t - \wbf_{t+1}^k\in\mathbb{R}^d, \forall k \in [K]$  
to the BS\footnote{The parameter normalization and de-normalization procedure in wireless FL follows the same as that in the Appendix of \cite{zhu2019broadband}.}, where ${\mathsf{x}_t^k} \triangleq [x_{1,t}^k, \cdots, x_{i,t}^k,\cdots, x_{d,t}^k]^T$. To simplify the notation, we omit index $t$ by using  $x_{k,i}$ instead of $x_{i,t}^k$ barring any confusion. We assume that each client transmits every element of the differential model $\{x_{k,i}\}_{i = 1}^d$ via $d$ shared time slots\footnote{In general, differential model parameters can be transmitted over any $d$ shared time-frequency resources. For simplicity, we use $d$ time slots here.}. For a given element $x_{k,i}$, the received signal at the BS is
\begin{equation}\label{eq:FLuplink}
    \ybf_i = \sqrt{P}\sum_{k\in[K]} {\bf h}_k x_{k,i} +  \nbf_i,\;\;\forall i = 1,\cdots,d, 
\end{equation}
where $P$ is the maximum transmit power of each client, $\hbf_k\in\mathbb{C}^{M\times1}$ is the wireless channel between $k$-th client and BS, and $\nbf_i\in\mathbb{C}^{M\times1}$ is the uplink noise. We assume normalized symbol power $\expt\norm{x_{k,i}}^2 = 1$, {normalized} Rayleigh block fading channel\footnote{{Large-scale pathloss and shadowing effect is assumed to be taken care of by, e.g., open loop power control \cite{SesiaLTE}.} } $\hbf_k\sim\mathcal{CN}(0,{\frac{1}{M}\bf I})$ in $d$ slots, and independent and identically distributed (i.i.d.) Gaussian noise $\nbf_i \sim\mathcal{CN}(0,\sigma^2{\bf I})$. 
We define the signal-to-noise ratio (SNR) as $\ssf{SNR} \triangleq {P}/{\sigma^2}$, and w.l.o.g. we set $P = 1$. 
Denoting $\Hbf \triangleq \squab{\hbf_1, \cdots, \hbf_K}\in\mathbb{C}^{M\times K}$ and $\xbf_i \triangleq \squab{x_{1,i},\cdots,x_{K,i}}^T \in\mathbb{R}^{K\times 1}, \forall i = 1,\cdots, d$, the received signal\footnote{For simplicity, we assume real signals $\{x_{k,i}\}_{i = 1}^d$ are transmitted in this paper. It can be easily extended to complex signals by stacking two real model parameters into a complex signal, so that the full d.o.f. is utilized.} can be written as
\begin{equation}\label{eq:MatrixForm}
    \ybf_i = \Hbf \xbf_i + \nbf_i.
\end{equation}
Eqn.~\eqref{eq:MatrixForm} is a standard MIMO model and traditional MIMO decoders can be adopted to estimate $\hat\xbf_i = \squab{\hat x_{1,i},\cdots,\hat x_{K,i}}^T$. However, as discussed before, decoding $\{x_{k,i}\}_{i = 1}^d$ individually and obtaining the aggregated parameter $\tilde{x}_i\triangleq\sum_{k\in[K]} \hat x_{k,i}$ by a summation is inefficient. We propose a novel method that allows the BS to compute $\tilde{x}_i$ directly. Note that after BS decoding all aggregated parameter $\tilde \xbf_t \triangleq \squab{\tilde{x}_1, \cdots,\tilde{x}_d}^T$ in $d$ slots, it can compute the new global model as
\begin{equation}\label{eq:diffGlobal}
    \wbf_{t + 1} = \wbf_t + \frac{1}{K}\tilde \xbf_t.
\end{equation}

\begin{figure*}
    \centering
    \includegraphics[width = 0.9 \linewidth]{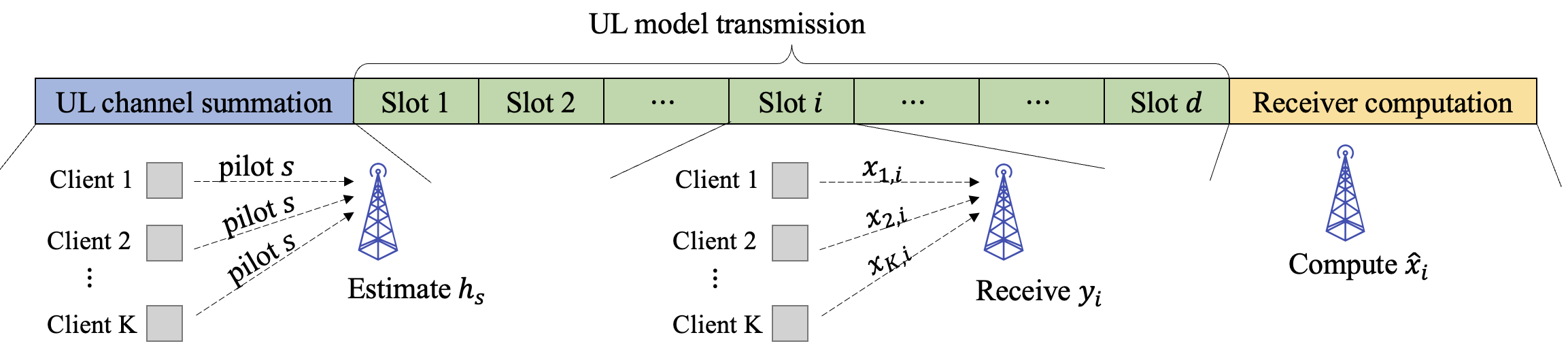}
    \caption{An illustration of the proposed uplink FL design with massive MIMO.}
    \label{fig:framework}
    \vspace{-0.2in}
\end{figure*}

\section{Random Orthogonalization}
\label{sec:RO}
We study a wireless FL framework where the global model can be directly obtained at the BS via a simple operation. By exploring favorable propagation and channel hardening in massive MIMO, our proposed FL framework obtains the global model by the following three main steps.

\mypara{(1) Uplink channel summation.}
The BS first schedules all participating clients to transmit a \emph{common} pilot signal $s$ synchronously. The received signal at the BS is
\begin{equation*}
    \ybf_s = \sum_{k\in[K]} \hbf_k s + \nbf_s,
\end{equation*}
so that the BS can estimate the {\em summation} of channel {vectors} $\hbf_s \triangleq \sum_{k \in [K]} \hbf_k$ from the received signal $\ybf_s$ (e.g., via a maximum likelihood estimator). For simplicity, we assume perfect sum channel estimation at the BS.

\mypara{(2) Uplink model transmission.} All clients transmit model differential parameters $\{x_{k,i}\}_{i=1}^d$ to the BS in $d$ time slots. The received signal for each differential model element is 
\begin{equation*}
    \ybf_{i} = \sum_{k \in [K]} \hbf_k x_{k,i} + \nbf_i,\;\;\forall i = 1,\cdots, d.
\end{equation*}

\mypara{(3) Receiver computation.} The BS estimates each aggregated parameter via a simple \emph{linear projection} operation:
\begin{align}
    & \tilde x_i = \hbf_s^H \ybf_{i} = \sum_{k \in [K]} \hbf_k^H\sum_{k \in [K]} \hbf_k x_{k,i} +  \sum_{k \in [K]} \hbf_k^H\nbf_i \nonumber\\
    & \overset{(a)}{=} \underbrace{\sum_{k \in [K]} \hbf_k^H \hbf_k x_{k,i}}_{\text{Signal}} + \underbrace{\sum_{k \in [K]}\sum_{j \in [K], j\neq k} \hbf_k^H \hbf_j x_{j,i}}_{\text{Interference}}  +  \underbrace{\sum_{k \in [K]} \hbf_k^H\nbf_i}_{\text{noise}} \nonumber \\
    & \overset{(b)}{\approx} \sum_{k \in [K]} x_{k,i},\;\;\forall i = 1,\cdots, d. \label{eq:OTAComp}
\end{align}
The above three-step procedure is illustrated in Fig.~\ref{fig:framework}. Based on Eqn.~\eqref{eq:OTAComp}, BS then computes the global model via Eqn.~\eqref{eq:diffGlobal} and begins the next communication round.
As shown in part (a) of Eqn.~\eqref{eq:OTAComp}, inner product $\hbf_s^H \ybf_{i}$ can be regarded as the combination of three parts: signal, interference, and noise. We next show that, taking advantage of two fundamental properties of massive MIMO, the approximation (b) in Eqn.~\eqref{eq:OTAComp} is asymptotically error-free, as the number of antennas at the BS $M$ goes to infinity. 

\mypara{Channel hardening.} Since each element of $\hbf_k$ is i.i.d. complex Gaussian, by the law of large numbers, massive MIMO enjoys channel hardening \cite{ngo2014aspects}:
\begin{equation*}\label{eq:ChannelHardening}\small
    \hbf_k^H\hbf_k\rightarrow 1,\;\;\text{as}\;\;M\rightarrow\infty.
\end{equation*}
In practical systems, when $M$ is large but finite, we have
\begin{equation}
    \expt_{\hbf}\squab{\sum_{k \in [K]} \hbf_k^H \hbf_k x_{k,i}} = \sum_{k \in [K]}x_{k,i},
\end{equation}
and
\begin{equation}\label{eq.varhkk}\small
    \variance_{\hbf}\squab{\sum_{k \in [K]} \hbf_k^H \hbf_k x_{k,i}} = \frac{\sum_{k \in [K]}x_{k,i}^2}{M}
\end{equation}
for the signal part of (\ref{eq:OTAComp}).

\mypara{Favorable propagation.} Since channels between different users are independent random vectors, massive MIMO also offers favorable propagation \cite{ngo2014aspects}:
\begin{equation*}
    \hbf_k^H\hbf_j\rightarrow 0,\;\; \text{as} \;\; M\rightarrow\infty,\;\;\forall k\neq j.
\end{equation*}
Similarly, when $M$ is finite, we have
\begin{equation}\small
    \expt_{\hbf}\squab{\sum_{k \in [K]}\sum_{j \in [K], j\neq k} \hbf_k^H \hbf_j x_{j,i}} = 0,
\end{equation}
and
\begin{equation}\label{eq.varhkj}\small
    \variance_{\hbf}\squab{\sum_{k \in [K]}\sum_{j \in [K], j\neq k} \hbf_k^H \hbf_j x_{j,i}} = \frac{(K-1)\sum_{k \in [K]}x^2_{k,i}}{M}.
\end{equation}

Furthermore, the expectation of the noise part in \eqref{eq:OTAComp} is zero. Therefore, $\tilde{x}_i$ in Eqn.~\eqref{eq:OTAComp} is an unbiased estimate of the average model. For a given $K$, the variances of both signal and interference decrease in the order of $\mathcal{O}(1/M)$, which shows that \textit{massive MIMO offers {\bf random orthogonality} for analog aggregation over wireless channels}. In particular, the asymptotic element-wise orthogonality of channel vector ensures channel hardening, and the asymptotic vector-wise orthogonality among different wireless channel vectors provides favorable propagation, which make the linear projection operation $\hbf_s^H \ybf_{i}$ an ideal fit for FL. 

To gain some insight of random orthogonality, we approximate the average signal-to-interference-plus-noise-ratio (SINR) after the operation in Eqn.~\eqref{eq:OTAComp} as
\begin{equation}\label{eq:SINR} \small
\begin{split}
    &\expt[\text{SINR}_i] \approx \\&  \frac{\expt_{\hbf,x}\norm{\sum_{k \in [K]} \hbf_k^H \hbf_k x_{k,i}}^2}{\expt_{\hbf,\nbf,x}\norm{\sum_{k \in [K]}\sum_{j \in [K], j\neq k} \hbf_k^H \hbf_j x_{j,i} + \sum_{k \in [K]} \hbf_k^H\nbf_i}^2}\\
    & = \frac{M}{K - 1 + 1/\ssf{SNR}},
\end{split}
\end{equation}
which grows linearly with $M$ for a fixed $K$. 
We note that Eqn.~\eqref{eq:SINR} is an approximate expression for SINR but it sheds light into the relationship between $K$ and $M$. This approximation, however, is not used in the convergence analysis of FL with random orthogonalization in Section \ref{subsec:CvgAna}.

\begin{remark}
Unlike the analog aggregation method in \cite{zhu2019broadband}, random orthogonalization does not require any CSIT, and only requires the receiver to estimate a summation channel $\hbf_s$, which is $1/K$ of the channel estimation overhead compared with the AirComp method in \cite{yang2020federated} and traditional MIMO decoders. Moreover, the global model is obtained after a single linear projection, which improves the privacy and reduces the system latency.
\end{remark}

\begin{remark}
The proposed framework assumes a perfect estimation of $\hbf_s$ and requires channel hardening and favorable propagation. In practical systems, to improve the accuracy of the estimate $\hat{\hbf}_s$, BS can adopt multiple pilots for channel estimation. We will provide more details on the robustness of the proposed scheme over imperfect $\hat{\hbf}_s$ and evaluate the circumstances where channel hardening and favorable propagation are not fully offered, e.g. correlated channels, in the journal version.
\end{remark}

\section{Performance Analysis}
\label{sec:analysis}

In this section, we analyze the performances of random orthogonalization in FL. We first derive CRLBs of the estimates of global model parameters as the theoretical benchmark of the proposed method. Then, by {an} ML model convergence analysis, we investigate the relationship between the number of involved clients $K$ and the number of BS antennas $M$. We show that random orthogonalization has the potential to achieve nearly the same convergence rate as the interference-free case in massive MIMO systems. 

\subsection{Cramer-Rao Lower Bounds}
Recall that the received signal is $\ybf_{i} = \Hbf\xbf_i + \nbf_i$. Denoting $\mubf = \Hbf\xbf_i$, we have that $\ybf_{i} \sim \mathcal{CN}(\mubf, \frac{1}{\ssf{SNR}}\Ibf)$. The Fisher information matrix (FIM) of the estimation of $ \xbf_i$ is 
\begin{equation*}
    \Fbf = 2\cdot  \ssf{SNR} \cdot  \real\squab{\frac{\partial^H\mubf(\xbf_i)}{\partial \xbf_i}\frac{\partial\mubf(\xbf_i)}{\partial \xbf_i}}.
\end{equation*}
After inserting $\frac{\partial\mubf(\xbf_i)}{\partial \xbf_i} = \Hbf$ into FIM, we have $\Fbf = 2 \cdot \ssf{SNR} \cdot \real(\Hbf^H\Hbf)$. The CRLBs are given by the inverse of the FIM
    $\Cbf_{\hat\xbf_i} = \Fbf^{-1}.$ 
CRLB expresses a lower bound on the variance of unbiased estimators, stating that the variance of any such estimator is at least as high as the inverse of the FIM. Eqn.~\eqref{eq:OTAComp} has shown that the proposed method leads to an unbiased estimation of the global model; hence we can use the sum of all diagonal elements of $\Cbf_{\hat\xbf}$ as the lower bound of the mean squared error (MSE) $\expt\norm{\xbf_i - \hat\xbf_i}^2$ to evaluate the performance of global model estimation. 

\subsection{Convergence analysis}\label{subsec:CvgAna}
To simplify the analysis, we assume\footnote{We will address the general case of $E>1$ in the journal version.} $E = 1$, which is also referred to as \emph{parallel SGD} \cite{stich2018local}, and make the following standard assumptions that are commonly adopted in the convergence analysis of \textsc{FedAvg} and its variants; see \cite{li2019convergence,jiang2018nips,stich2018local,Zheng2020jsac}.  
\begin{assumption}\label{as:1}
\textbf{$L$-smooth:} $\forall~\vect{v}$ and $\vect{w}$, $\norm{f_k(\vbf)-f_k(\wbf)}\leq L \norm{\vbf-\wbf}$;
\end{assumption}
\begin{assumption}\label{as:2}
\textbf{$\mu$-strongly convex:} $\forall~\vect{v}$ and $\vect{w}$, $\left<f_k(\vbf)-f_k(\wbf), \vbf-\wbf\right>\geq \mu \norm{\vbf-\wbf}^2$;
\begin{assumption}\label{as:3}
\textbf{Unbiased SGD:} $\forall k \in [K]$, $\expt[\nabla\tilde{f}_k(\wbf)] = \nabla f_k(\wbf)$;
\end{assumption}
\begin{assumption}\label{as:4}
\textbf{Uniformly bounded gradient:} $\forall k \in [K]$, $\expt\norm{\nabla\tilde{f}_k(\wbf)}^2 \leq H^2$ for all mini-batch data.
\end{assumption}
\end{assumption}
\begin{lemma}[\textbf{\em One-step convergence}]\label{lemma.1}
Based on Assumptions 1-4 and selecting learning rate $\eta_t\leq 1/(2\mu)$, $\forall t\in [T]$, the following inequality holds for parallel SGD:
\begin{equation}\label{eq.thrm1}\small
\begin{split}
    \expt\norm{\wbf_{t+1} - \wbf^*}^2& \leq(1-2\mu\eta_t)\expt\norm{\wbf_t - \wbf^*}^2\\& + \eta_t^2\squab{1 + \frac{K+1/\ssf{SNR}}{M}}\frac{H^2}{K}.
\end{split}
\end{equation}
\end{lemma}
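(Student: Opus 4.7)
The plan is to split $\mathbf{w}_{t+1}$ into an ``ideal'' parallel-SGD iterate plus a channel-induced residual, bound each piece, and combine. With $E=1$, each local update is simply $\mathbf{w}_{t+1}^k = \mathbf{w}_t - \eta_t \nabla\tilde{f}_k(\mathbf{w}_t)$, so the transmitted differential is $x_{k,i} = \eta_t\,\nabla_i \tilde{f}_k(\mathbf{w}_t)$. Plugging this into the random-orthogonalization aggregation of Eqn.~(10) and the global-update rule of Eqn.~(3) gives
\begin{equation*}
\mathbf{w}_{t+1} = \bar{\mathbf{w}}_{t+1} + \mathbf{r}_t,\qquad \bar{\mathbf{w}}_{t+1}\triangleq \mathbf{w}_t - \tfrac{\eta_t}{K}\sum_{k\in[K]} \nabla\tilde{f}_k(\mathbf{w}_t),
\end{equation*}
where $\mathbf{r}_t$ collects, coordinate by coordinate, the hardening residual $\sum_k(\mathbf{h}_k^H\mathbf{h}_k-1)x_{k,i}$, the multi-user interference $\sum_k\sum_{j\neq k}\mathbf{h}_k^H\mathbf{h}_j x_{j,i}$, and the projected noise $\mathbf{h}_s^H\mathbf{n}_i$, each rescaled by $1/K$.

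Next I would expand $\|\mathbf{w}_{t+1}-\mathbf{w}^*\|^2$ as $\|\bar{\mathbf{w}}_{t+1}-\mathbf{w}^*\|^2 + 2\langle\bar{\mathbf{w}}_{t+1}-\mathbf{w}^*,\mathbf{r}_t\rangle + \|\mathbf{r}_t\|^2$ and take expectations over mini-batches, channels, and noise. Conditioning on the stochastic gradients first, the zero-mean identities in Eqns.~(5) and (7) give $\mathbb{E}[\mathbf{r}_t\mid\{\nabla\tilde{f}_k\}]=\mathbf{0}$, so the cross term is killed by the tower property. For the clean term $\mathbb{E}\|\bar{\mathbf{w}}_{t+1}-\mathbf{w}^*\|^2$, I would invoke the standard strongly-convex parallel-SGD one-step recursion: expand the square; use Assumption~3 to recover $\nabla f(\mathbf{w}_t)$ in the linear part; apply Assumption~2 together with $\nabla f(\mathbf{w}^*)=\mathbf{0}$ to extract the contraction factor $(1-2\mu\eta_t)$, which stays nonnegative under $\eta_t\leq 1/(2\mu)$; and control $\mathbb{E}\|\tfrac{1}{K}\sum_k\nabla\tilde{f}_k(\mathbf{w}_t)\|^2$ by $H^2/K$ using the independence of per-client mini-batches and Assumption~4, with the $L$-smoothness from Assumption~1 absorbing any residual $\|\nabla f(\mathbf{w}_t)\|^2$ contribution into the contraction.

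For the residual term $\mathbb{E}\|\mathbf{r}_t\|^2$, I would sum the per-coordinate variance identities in Eqns.~(6) and (8) together with the variance of the projected noise $\mathbf{h}_s^H\mathbf{n}_i$ computed from $\mathbf{h}_s\sim\mathcal{CN}(\mathbf{0},(K/M)\mathbf{I})$. After pulling the $1/K^2$ from global averaging outside, substituting $x_{k,i}=\eta_t\,\nabla_i\tilde{f}_k(\mathbf{w}_t)$, and invoking Assumption~4 in the form $\mathbb{E}\sum_i\sum_k x_{k,i}^2\leq \eta_t^2 K H^2$, the hardening and interference pieces collapse into $\eta_t^2 H^2/M$, while the noise piece contributes $\eta_t^2 H^2/(MK\cdot\mathsf{SNR})$. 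Together these combine into the bracketed coefficient $[1+(K+1/\mathsf{SNR})/M]\,H^2/K$ that appears in the lemma.

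The main obstacle I anticipate is the residual bookkeeping: Eqns.~(6)--(9) give per-coordinate variances for \emph{deterministic} $x_{k,i}$, whereas here the $x_{k,i}$ are themselves random through the mini-batches, so the channel/noise averaging must be performed first and only then followed by Assumption~4, with care to avoid double-counting between the hardening variance and the ambient gradient power. A secondary subtlety is reconciling the tight $H^2/K$ variance-reduction factor with Assumption~4 (which only bounds the second moment, not the variance): the sharper $1/K$ scaling follows from treating the per-client stochastic gradients as conditionally independent across $k$ so that the averaged gradient's variance genuinely shrinks with $K$, while $L$-smoothness is used to absorb the $\|\nabla f(\mathbf{w}_t)\|^2$ mean-part into the $(1-2\mu\eta_t)$ contraction factor.
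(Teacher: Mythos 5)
Your proposal follows essentially the same route as the paper's proof: the identical decomposition of $\wbf_{t+1}$ into the error-free averaged iterate $\bar\wbf_{t+1}$ plus a channel-induced residual, the same vanishing cross term via zero-mean channel/noise statistics, the standard strongly-convex parallel-SGD one-step recursion (cited in the paper from the literature) for the clean term, and the channel-hardening/favorable-propagation variance identities with $x_{k,i}=\eta_t\nabla_i\tilde f_k(\wbf_t)$ and Assumption 4 for the residual, yielding the same coefficient $\left[1+\frac{K+1/\ssf{SNR}}{M}\right]\frac{H^2}{K}$. The only place you, like the paper, leave details implicit is the projected-noise piece, whose stated $\eta_t^2 H^2/(MK\cdot\ssf{SNR})$ form relies on the parameter normalization/de-normalization convention (footnote to the communication model) rather than on the raw variance of $\hbf_s^H\nbf_i$ alone.
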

\begin{proof}
We introduce an auxiliary error-free global model $\bar\wbf_{t + 1} = \frac{1}{K}\wbf_{t+1}^k$. We first have
\begin{equation}\small\label{eq.norm}\begin{split}
    & \expt\norm{\wbf_{t+1} - \wbf^*}^2 = \expt\norm{\wbf_{t+1} - \bar\wbf_{t + 1} + \bar\wbf_{t + 1}  - \wbf^*}^2\\
    & = \expt \underbrace{\norm{\wbf_{t+1} - \bar\wbf_{t + 1}}^2}_{A_1} + \expt \underbrace{\norm{\bar\wbf_{t + 1}  - \wbf^*}^2}_{A_2}\\
    & + 2\expt \underbrace{{\left<\wbf_{t+1} - \bar\wbf_{t + 1}, \bar\wbf_{t + 1}  - \wbf^*\right>}}_{A_3}.
\end{split}
\end{equation}
Note that $\expt{[A_3]} = 0$. 
Then, $\expt{[A_2]}$ can be obtained from a well-known result \cite{li2019convergence}:
\begin{equation}\small\label{eq.A2}
    \expt\norm{\wbf_{t+1} - \wbf^*}^2 \leq(1-2\mu\eta_t)\expt\norm{\wbf_t - \wbf^*}^2 + \eta_t^2\frac{H^2}{K}.
\end{equation}
We finally focus on $\expt{[A_1]}$. Based on \eqref{eq.varhkk} and (\ref{eq.varhkj}), we have
\begin{equation}\label{eq.A1}\small
\begin{split}
    & \expt\norm{\wbf_{t+1} - \bar\wbf_{t + 1}}^2 = \expt\norm{\frac{1}{K}\sum_{k\in[K]}\xbf_{k} - \frac{1}{K}\sum_{k\in[K]}\hat\xbf_{k}}^2\\
    & = \frac{1}{K^2}\expt\left\Vert\sum_{k \in [K]} \hbf_k^H \hbf_k \xbf_{k}+ \sum_{k \in [K]}\sum_{j \in [K], j\neq k} \hbf_k^H \hbf_j \xbf_{j}\right.\\
    &\left. +  \sum_{k \in [K]} \hbf_k^H\nbf_i -\sum_{k\in[K]}\xbf_{k} \right\Vert^2\\
    & = \eta_t^2\frac{\sum_{k\in [K]}\expt\norm{\nabla\tilde{f}_k(\wbf)}^2}{K^2}\left(\frac{K + 1/\ssf{SNR}}{M}\right)\\
    &
    \leq\eta_t^2\frac{H^2}{K}\left(\frac{K + 1/\ssf{SNR}}{M}\right).
\end{split}
\end{equation}
Plugging (\ref{eq.A2}) and (\ref{eq.A1}) back to (\ref{eq.norm}) completes the proof.
\end{proof}

Building on Lemma \ref{lemma.1}, we next present a complete convergence upper bound for random orthogonalization. Due to space limitation, the proof of Theorem~\ref{thm.1} is omitted and will be reported in the journal version.

\begin{theorem}[\textbf{\em Convergence for random orthogonalization}]
\label{thm.1} 
With Assumptions 1-4, for some $\gamma\geq 0$, if we select the learning rate as $\eta_t = \frac{2}{\mu(t+\gamma)}$, we have
\begin{equation}\label{eq.convergence}
    \expt[f(\wbf_t)] - f^* \leq \frac{L}{2(t + \gamma)}\left[\frac{4B}{\mu^2} + (1 + \gamma)\norm{\wbf_0 - \wbf^*}^2\right],
\end{equation}
for any $t\geq 1$, where 
\begin{equation}\label{eq.Bdef}
B\triangleq \squab{1 + \frac{K+1/\ssf{SNR}}{M}}\frac{H^2}{K}.
\end{equation}
\end{theorem}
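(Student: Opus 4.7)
The plan is to combine the one-step recursion established in Lemma~\ref{lemma.1} with a standard induction argument on the mean-squared distance to the optimum, $\Delta_t \triangleq \expt\norm{\wbf_t-\wbf^*}^2$, and then convert the resulting bound on $\Delta_t$ into a bound on the optimality gap via $L$-smoothness. The heavy lifting is already done by Lemma~\ref{lemma.1}: the only quantitative change from the classical analysis of \textsc{FedAvg} in \cite{li2019convergence} is that the gradient-noise term $\eta_t^2 H^2/K$ there is replaced by $\eta_t^2 B$ here, where $B$ in~\eqref{eq.Bdef} carries the massive-MIMO interference and receiver-noise contributions.

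First I would substitute the stepsize $\eta_t=2/(\mu(t+\gamma))$ into the one-step bound $\Delta_{t+1}\leq (1-2\mu\eta_t)\Delta_t+\eta_t^2 B$ to obtain $\Delta_{t+1}\leq (1-4/(t+\gamma))\Delta_t + 4B/(\mu^2(t+\gamma)^2)$. The admissibility requirement $\eta_t\leq 1/(2\mu)$ from Lemma~\ref{lemma.1} is enforced by taking $\gamma\geq 4$, so that the contraction factor $1-4/(t+\gamma)$ remains nonnegative for all $t\geq 0$. Next I define $v\triangleq 4B/\mu^2+(1+\gamma)\norm{\wbf_0-\wbf^*}^2$ and prove by induction that $\Delta_t\leq v/(t+\gamma)$ for every $t\geq 0$. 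The base case $t=0$ is immediate from $v\geq (1+\gamma)\Delta_0\geq \gamma\Delta_0$. For the inductive step, writing $s=t+\gamma$, the hypothesis $\Delta_t\leq v/s$ together with the recursion gives $\Delta_{t+1}\leq [v(s-4)+4B/\mu^2]/s^2$; it then suffices to verify $[v(s-4)+4B/\mu^2]/s^2\leq v/(s+1)$, which after clearing denominators reduces to the elementary inequality $(4B/\mu^2)(s+1)\leq v(3s+4)$. This holds for every $s\geq 0$ because $3(s+1)\leq 3s+4$ and $v\geq 4B/\mu^2$.

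Once $\Delta_t\leq v/(t+\gamma)$ has been established, the final step is a one-line application of Assumption~\ref{as:1}. Since $\nabla f(\wbf^*)=0$ at the global optimum of the strongly convex objective, $L$-smoothness specializes to $f(\wbf_t)-f^*\leq (L/2)\norm{\wbf_t-\wbf^*}^2$. Taking expectations and plugging in the induction bound yields $\expt[f(\wbf_t)]-f^*\leq Lv/(2(t+\gamma))$, which is exactly~\eqref{eq.convergence} after substituting the definition of $v$.

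The only genuinely non-routine point is the inductive step's algebra, where $v$ must be chosen large enough to simultaneously absorb the noise term $4B/\mu^2$ and the initial condition $\gamma\Delta_0$. The \emph{additive} choice $v=4B/\mu^2+(1+\gamma)\norm{\wbf_0-\wbf^*}^2$, rather than a maximum of the two quantities, is precisely what produces the additive form appearing in~\eqref{eq.convergence}. I would also note explicitly that all of the dependence on the system parameters $K$, $M$, and $\ssf{SNR}$ enters the final bound solely through $B$ defined in~\eqref{eq.Bdef}, so the $\mathcal{O}(1/M)$ decay of interference-plus-noise variance established in Section~\ref{sec:RO} translates directly into an $\mathcal{O}(1/M)$ improvement of the convergence constant.
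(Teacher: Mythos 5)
Your proposal is correct, and it is essentially the intended argument: the paper omits the proof of Theorem~\ref{thm.1} (deferring it to the journal version), but its route is clearly Lemma~\ref{lemma.1} plus the standard induction of \cite{li2019convergence}, which is exactly what you carry out --- substituting $\eta_t=2/(\mu(t+\gamma))$, proving $\expt\norm{\wbf_t-\wbf^*}^2\leq v/(t+\gamma)$ with $v=4B/\mu^2+(1+\gamma)\norm{\wbf_0-\wbf^*}^2$, and finishing with $L$-smoothness at the optimum. Your observation that $\gamma\geq 4$ is needed so that $\eta_t\leq 1/(2\mu)$ and the contraction factor $1-4/(t+\gamma)$ stays nonnegative is a correct and worthwhile refinement of the theorem's loose ``for some $\gamma\geq 0$'' phrasing.
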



Lemma \ref{lemma.1} and Theorem \ref{thm.1} illustrate that there are two main factors that influence the convergence rate of FL: \textbf{variance reduction} and \textbf{channel interference}. In particular, the definition of $B$ in \eqref{eq.Bdef}, which appears in both Lemma \ref{lemma.1} and Theorem \ref{thm.1}, captures the joint impact of both factors.  The nature of distributed SGD suggests that, for a fixed mini-batch size at each client, involving $K$ devices enjoys a $\frac{1}{K}$ variance reduction of stochastic gradient at each SGD iteration \cite{johnson2013accelerating}, which is captured by the $\frac{H^2}{K}$ term in \eqref{eq.thrm1} and \eqref{eq.convergence}. However, due to the existence of interference, the convergence rate is determined by both variance reduction and channel interference, shown as $\frac{H^2}{K}$ and $\frac{(K + 1/\ssf{SNR})H^2}{MK}$ terms in \eqref{eq.Bdef}. This suggests that the desired variance reduction may be adversely impacted if channel interference dominates the convergence bound. In particular, when $M>>K$, we have $\frac{1}{K}>>\frac{K+1/\ssf{SNR}}{MK}$, and the system enjoys almost the same variance reduction as the interference-free case. However, in the case of $K>>M$, we have $\frac{(K + 1/\ssf{SNR})}{MK}\approx \frac{1}{M}>>\frac{1}{K}$, and $\frac{H^2}{M}$ dominates the convergence bound. In this case, blindly increasing the number of clients is unwise, as it does not bring the advantage of variance reduction. 

\begin{remark}
In massive MIMO, a BS is usually equipped with hundreds of antennas. Although there may exist large number of users participating in FL, only a small number of them are simultaneously active \cite{yang2020federated}, especially in millimeter wave cells whose coverage are usually small. Both factors indicate that $K<<M$ often holds in typical massive MIMO systems. The  analysis reveals that our proposed framework enjoys nearly the same interference-free convergence rate with low communication and computation overhead in massive MIMO systems.
\end{remark} 

\begin{figure}
    \centering
    \includegraphics[width = 0.9\linewidth]{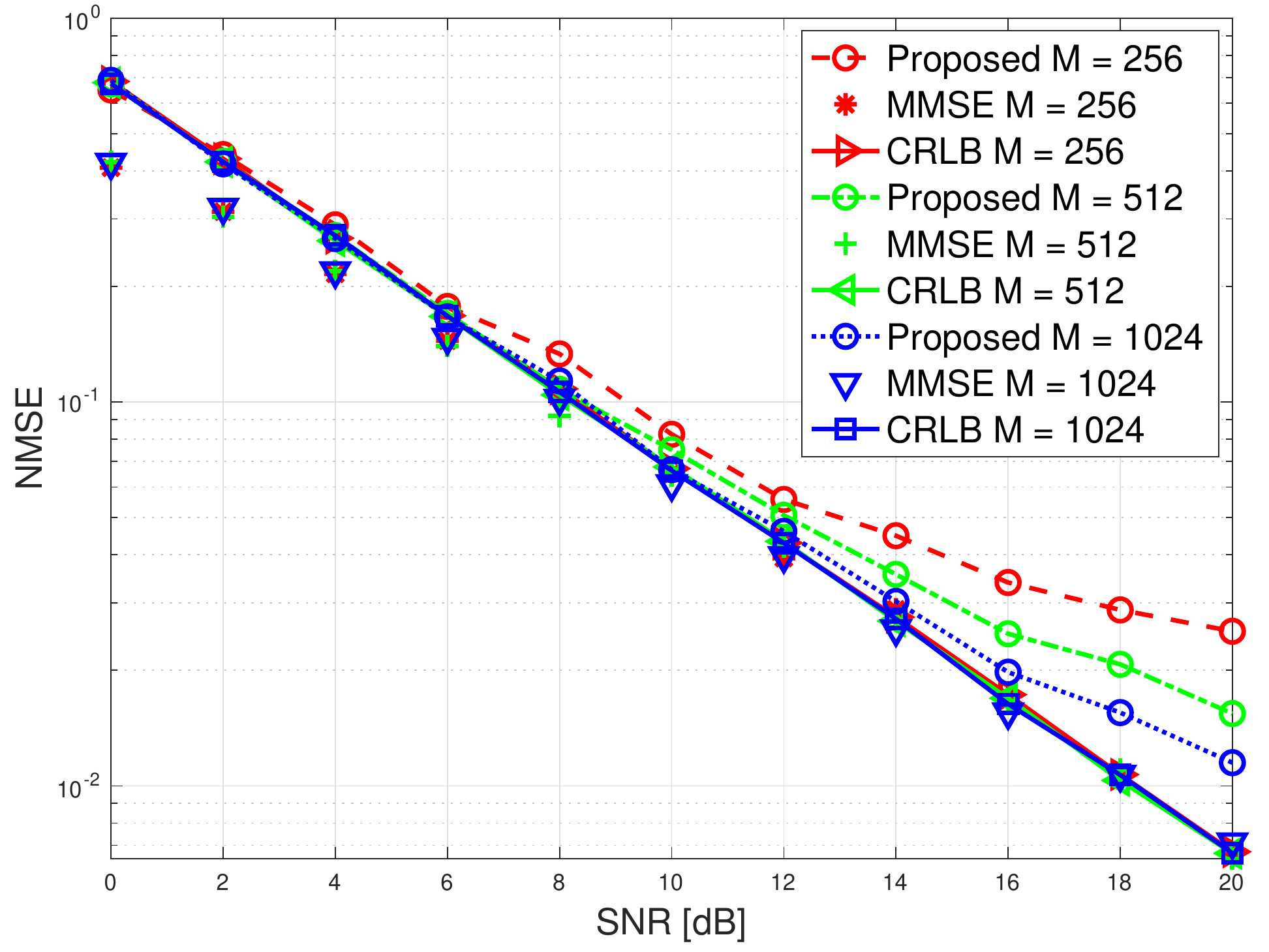}
    \caption{NMSE of the global ML model parameters versus SNR.}
    \label{fig:LP}
\end{figure}
\begin{figure}
    \centering
    \includegraphics[width = 0.9\linewidth]{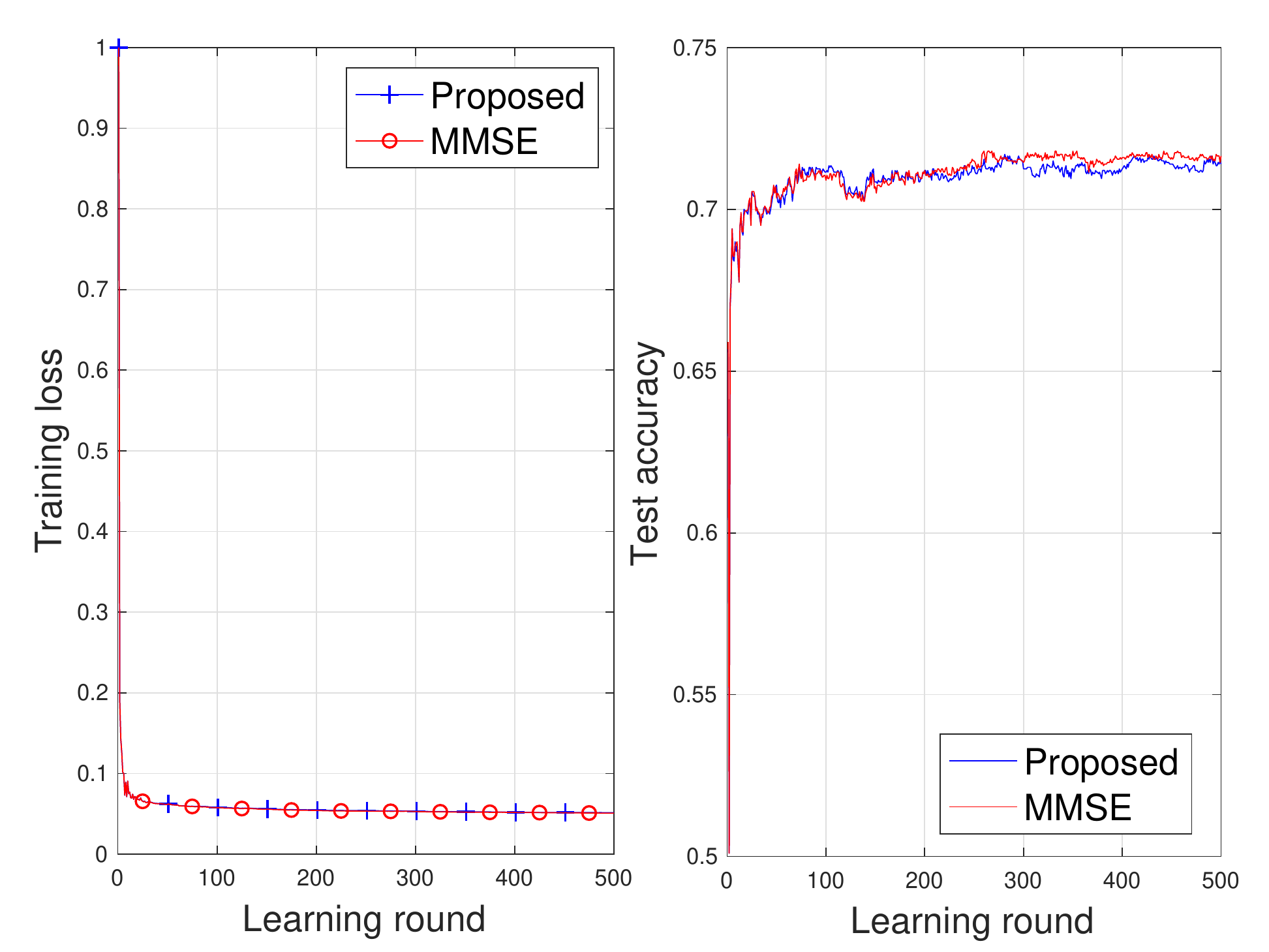}
    \caption{Training loss and test accuracy of a SVM FL task of random orthogonalization and MMSE.}
    \label{fig:SVM}
\end{figure}

\section{Experiments}
\label{sec:sim}

We evaluate the performances of random orthogonalization for uplink FL communications through numerical experiments. From a communication performance perspective, we compare the proposed method with the traditional MIMO detector to compute the global model. Then, we use a real-world FL task to evaluate the learning performance of the proposed method.

\subsection{Communication performance}
We consider a massive MIMO BS with $M = 256, 512$, and $1024$ antennas, with $K = 8$ active users participating in a FL task. We assume a Rayleigh fading channel model, i.e., $\hbf_k \sim \mathcal{CN}(0,\frac{1}{M}\Ibf)$, for each user, and use the normalized mean square error (NMSE) of the computed global model to evaluate the system performance. All NMSE results in the simulation are obtained from $2,000$ Monte Carlo experiments. Fig.~\ref{fig:LP} compares the NMSE performance of the proposed scheme with a MMSE decoder as well as CRLB under different system SNRs. As illustrated in Fig.~\ref{fig:LP}, the proposed method performs nearly identically {to} the MMSE decoder in low and moderate SNRs under different antenna configurations (see $\ssf{SNR} \leq 12$ dB). As the SNR increases, the dominant factor affecting system performances {becomes} the interference among different users. Unlike the MMSE decoder that can cancel all interferences when $K\leq M$ at high SNR, Eqn.~\eqref{eq:SINR} shows that, for a given $K$ and $M$, the proposed framework has a fixed (approximate) $\ssf{SIR} = \frac{K - 1}{M}$ as $\ssf{SNR} \rightarrow \infty$, which explains why the performance of the proposed scheme deteriorates compared with MMSE at high SNR. However, this issue disappears naturally as the number of BS antennas increases. It can be seen in Fig.~\ref{fig:LP} that the performance gap between the proposed method and MMSE reduces, from about $5$ dB when $M = 256$ to about $2$ dB when $M = 1024$ at $\ssf{SNR} = 20$ dB. Note that our method only requires $1/K$ of channel estimation overheard compared with MMSE, and this advantage is more significant when the BS is equipped with larger number of antennas.

\begin{table}
\caption{Computation time comparison between random orthogonalization and MMSE Decoder}
\label{table:comp}
    \centering
    \begin{tabular}{c| c c c}
    \hline
    \# antennas
    &   \multicolumn{3}{c}{Total CPU time (second)}\cr
    (M) & Proposed & MMSE & Proposed/MMSE\cr \hline
    \hline
    256 & 0.0186 & 2.7141 & 0.68\% \cr
    512 & 0.0303 & 12.4155 & 0.24\% \cr
    1024 & 0.0448 & 82.3530 & 0.05\% \cr\hline
    \end{tabular}
\end{table}

We next focus on the low-latency benefit of random orthogonalization. Table \ref{table:comp} compares the computation time of the proposed scheme and MMSE decoder with $\ssf{SNR} = 10$ dB. The total CPU time is the cumulative time {of each algorithm} over $2,000$ Monte Carlo experiments. We see that the time consumption of random orthogonalization is less than $1\%$ of the MMSE baseline. Especially, when $M = 1024$, despite the $0.3$ dB NMSE performance loss compared with the MMSE decoder (as shown in Fig.~\ref{fig:LP}), the computation time of the proposed method is only $0.05\%$ of the MMSE baseline. The results suggest that the random orthogonalization framework is attractive in massive MIMO systems, because it has nearly identical NMSE performance to CRLB but requires much less channel estimation overhead and achieves extremely lower system latency than the MMSE decoder.

\subsection{Learning performance}
We use a classification task to evaluate the ML model accuracy and convergence rate of the proposed random orthogonalization approach. In particular, we implement a support vector machine (SVM) to classify even and odd numbers in the MNIST handwritten-digit dataset \cite{deng2012mnist}, with $d = 784$. We consider a BS with $M = 256$ antennas and $K=8$ active clients involved in this task. The size of the local training set at each client is $500$, the size of the test set is $2,000$, and we set $E = 1$. Fig.~\ref{fig:SVM} reports the training loss and test accuracy of the proposed method and MMSE decoder with SNR $= 10 $ dB. Although the MSE of the global model at the BS during the learning process is about $2$ dB worse for random orthogonalization as shown in Fig.~\ref{fig:LP}, the actual learning performances of the two methods are nearly identical, further validating the effectiveness of random orthogonalization.

\section{Conclusion}
\label{sec:conc}

Leveraging the unique characteristics of channel hardening and favorable propagation in massive MIMO, we have proposed a novel uplink communication and processing method, coined \emph{random orthogonalization}, that significantly reduces the channel estimation overhead while achieving natural over-the-air model aggregation without requiring transmitter side channel state information. Theoretical performance analyses, from both communication (CRLB) and machine learning (model convergence rate) perspectives, have been carried out. The theoretical results suggested that random orthogonalization asymptotically achieves the same convergence rate as vanilla FL with perfect communications, and were further validated with numerical experiments. More importantly, random orthogonalization improves the scalability of FL, which is a critical feature that is often bottlenecked by the limited wireless resources.


\bibliographystyle{IEEEtran}
\bibliography{wireless,Shen,FedLearn,ref}

\begin{thebibliography}{10}
\providecommand{\url}[1]{#1}
\csname url@samestyle\endcsname
\providecommand{\newblock}{\relax}
\providecommand{\bibinfo}[2]{#2}
\providecommand{\BIBentrySTDinterwordspacing}{\spaceskip=0pt\relax}
\providecommand{\BIBentryALTinterwordstretchfactor}{4}
\providecommand{\BIBentryALTinterwordspacing}{\spaceskip=\fontdimen2\font plus
\BIBentryALTinterwordstretchfactor\fontdimen3\font minus
  \fontdimen4\font\relax}
\providecommand{\BIBforeignlanguage}[2]{{%
\expandafter\ifx\csname l@#1\endcsname\relax
\typeout{** WARNING: IEEEtran.bst: No hyphenation pattern has been}%
\typeout{** loaded for the language `#1'. Using the pattern for}%
\typeout{** the default language instead.}%
\else
\language=\csname l@#1\endcsname
\fi
#2}}
\providecommand{\BIBdecl}{\relax}
\BIBdecl

\bibitem{mcmahan2017fl}
B.~McMahan, E.~Moore, D.~Ramage, S.~Hampson, and B.~A. y~Arcas,
  ``Communication-efficient learning of deep networks from decentralized
  data,'' in \emph{Proc. AISTATS}, Fort Lauderdale, FL, USA, Apr. 2017, pp.
  1273--1282.

\bibitem{konecny2016fl}
J.~Konecny, H.~B. McMahan, F.~X. Yu, P.~Richtarik, A.~T. Suresh, and D.~Bacon,
  ``Federated learning: Strategies for improving communication efficiency,'' in
  \emph{Proc. {NIPS} Workshop on Private Multi-Party Machine Learning}, 2016.

\bibitem{kairouz2019advances}
P.~Kairouz \emph{et~al.}, ``Advances and open problems in federated learning,''
  \emph{arXiv preprint arXiv:1912.04977}, 2019.

\bibitem{niknam2020federated}
S.~Niknam, H.~S. Dhillon, and J.~H. Reed, ``Federated learning for wireless
  communications: Motivation, opportunities, and challenges,'' \emph{IEEE
  Communications Magazine}, vol.~58, no.~6, pp. 46--51, 2020.

\bibitem{zhu2019broadband}
G.~Zhu, Y.~Wang, and K.~Huang, ``Broadband analog aggregation for low-latency
  federated edge learning,'' \emph{IEEE Trans. Wireless Commun.}, vol.~19,
  no.~1, pp. 491--506, 2020.

\bibitem{yang2020federated}
K.~Yang, T.~Jiang, Y.~Shi, and Z.~Ding, ``Federated learning via over-the-air
  computation,'' \emph{IEEE Trans. Wireless Commun.}, vol.~19, no.~3, pp.
  2022--2035, 2020.

\bibitem{elbir2020federated}
A.~M. Elbir and S.~Coleri, ``Federated learning for hybrid beamforming in
  mm-wave massive {MIMO},'' \emph{IEEE Commun. Letter}, vol.~24, no.~12, pp.
  2795--2799, 2020.

\bibitem{huang2020physical}
T.~Huang, B.~Ye, Z.~Qu, B.~Tang, L.~Xie, and S.~Lu, ``Physical-layer arithmetic
  for federated learning in uplink {MU-MIMO} enabled wireless networks,'' in
  \emph{Proc. IEEE Conference on Computer Communications (INFOCOM)}.\hskip 1em
  plus 0.5em minus 0.4em\relax IEEE, 2020, pp. 1221--1230.

\bibitem{jeon2020compressive}
Y.-S. Jeon, M.~M. Amiri, J.~Li, and H.~V. Poor, ``A compressive sensing
  approach for federated learning over massive {MIMO} communication systems,''
  \emph{IEEE Trans. Wireless Commun.}, vol.~20, no.~3, pp. 1990--2004, 2020.

\bibitem{jeon2020gradient}
------, ``Gradient estimation for federated learning over massive {MIMO}
  communication systems,'' \emph{arXiv preprint arXiv:2003.08059}, 2020.

\bibitem{SesiaLTE}
S.~Sesia, I.~Toufik, and M.~Baker, \emph{{LTE - The UMTS Long Term Evolution:
  From Theory to Practice}}, 2nd~ed.\hskip 1em plus 0.5em minus 0.4em\relax
  Wiley, 2011.

\bibitem{ngo2014aspects}
H.~Q. Ngo, E.~G. Larsson, and T.~L. Marzetta, ``Aspects of favorable
  propagation in massive {MIMO},'' in \emph{Proc. 22nd European Signal
  Processing Conference (EUSIPCO)}.\hskip 1em plus 0.5em minus 0.4em\relax
  IEEE, 2014, pp. 76--80.

\bibitem{stich2018local}
S.~U. Stich, ``Local {SGD} converges fast and communicates little,'' in
  \emph{Proc. International Conference on Learning Representations (ICLR)},
  2018.

\bibitem{li2019convergence}
X.~Li, K.~Huang, W.~Yang, S.~Wang, and Z.~Zhang, ``On the convergence of
  {FedAvg} on non-{IID} data,'' in \emph{Proc. International Conference on
  Learning Representations (ICLR)}, 2020.

\bibitem{jiang2018nips}
P.~Jiang and G.~Agrawal, ``A linear speedup analysis of distributed deep
  learning with sparse and quantized communication,'' in \emph{Advances in
  Neural Information Processing Systems}, 2018, pp. 2525--2536.

\bibitem{Zheng2020jsac}
S.~Zheng, C.~Shen, and X.~Chen, ``Design and analysis of uplink and downlink
  communications for federated learning,'' \emph{IEEE J. Select. Areas
  Commun.}, vol.~39, no.~7, pp. 2150--2167, July 2021.

\bibitem{johnson2013accelerating}
R.~Johnson and T.~Zhang, ``Accelerating stochastic gradient descent using
  predictive variance reduction,'' \emph{Advances in Neural Information
  Processing Systems}, vol.~26, pp. 315--323, 2013.

\bibitem{deng2012mnist}
L.~Deng, ``The {MNIST} database of handwritten digit images for machine
  learning research,'' \emph{IEEE Signal Processing Magazine}, vol.~29, no.~6,
  pp. 141--142, 2012.

\end{thebibliography}

\end{document}